\documentclass[letterpaper, 10 pt, conference]{IEEEtran}  
\IEEEoverridecommandlockouts                              
\usepackage{graphicx} 
\usepackage{times} 
\usepackage{amsmath} 
\usepackage{amssymb}  
\usepackage{mathtools}
\usepackage{epstopdf}
\usepackage{subfig}
\usepackage{cite}
\usepackage[hidelinks]{hyperref}
\usepackage{framed}
\usepackage{multirow}
\usepackage{xcolor}
\usepackage{multirow}
\usepackage{etoolbox}
\usepackage{mathptmx}
\usepackage{calrsfs}
\usepackage{lipsum}
\usepackage{amsthm}
\newtheorem{theorem}{Theorem}

\newtheorem{corollary}{Corollary}

\newtheorem{remark}{Remark}

\newtheorem{problem}[theorem]{Problem}

\newcommand{\trace}{\mbox{\rm trace}}

\newcommand{\zpos}{\mbox{$\mathbb{Z}_{+}$}}

\newcommand{\rn}{\mbox{$\mathbb{R}^n$}} 
\newcommand{\rnn}{\mbox{$\mathbb{R}^{n \times n}$}}
\newcommand{\rank}{\mbox{$\rm rank$}}


%
%

\newcommand{\rar}{\rightarrow}

\newcommand{\tri}{\triangleq}

%

%
%
%
\newcommand{\be}{\begin{equation}}
\newcommand{\ee}{\end{equation}}
\newcommand{\bea}{\begin{eqnarray}}
\newcommand{\eea}{\end{eqnarray}}
\newcommand{\bes}{\begin{eqnarray*}}
\newcommand{\ees}{\end{eqnarray*}}
%
%
\newcommand{\bi}{\begin{itemize}}
\newcommand{\ei}{\end{itemize}}
\newcommand{\ben}{\begin{enumerate}}
\newcommand{\een}{\end{enumerate}}
%
%

%
%
\newcommand{\bp}{\begin{problem}}
\newcommand{\ep}{\end{problem}}
\newcommand{\hso}{\hspace{.1in}}
\newcommand{\hst}{\hspace{.2in}}

\hyphenation{op-tical net-works semi-conduc-tor}


%
%
\title{\LARGE \bf  Characterization of the Gray-Wyner Rate Region for Multivariate Gaussian Sources: Optimality of  Gaussian Auxiliary RV }

\author{Evagoras Stylianou, Charalambos D. Charalambous and Jan H. van Schuppen 
\thanks{E. Stylianou is with  Technical University of Munich. C. D. Charalambous is with  University of Cyprus. J. H. van Schuppen is with Van Schuppen Control Research, Gouden Leeuw 143, 1103 KB Amsterdam. E-mails:
   chadcha@ucy.ac.cy,jan.h.van.schuppen@xs4all.nl,evagoras.stylianou@tum..de}
}
 \IEEEaftertitletext{\vspace{-3.1ex}}
\begin{document}
\maketitle

\begin{abstract}
Examined in this paper,  is the Gray and Wyner achievable lossy rate region 
 for a tuple of correlated multivariate    Gaussian random variables (RVs) $X_1 : \Omega \rightarrow {\mathbb R}^{p_1}$ and $X_2 : \Omega \rightarrow {\mathbb R}^{p_2}$
 with respect to square-error distortions at the two decoders.  It is shown that 
 among all joint distributions induced by  a  triple of RVs $(X_1,X_2, W)$, such that  $W : \Omega \rightarrow {\mathbb W} $ is the auxiliary RV taking  continuous, countable, or finite values, the   Gray and Wyner achievable rate region is characterized by jointly Gaussian RVs $(X_1,X_2, W)$ such that $W $ is an $n$-dimensional Gaussian RV.  It then follows that the achievable rate region is  parametrized  by the three  conditional covariances $Q_{X_1,X_2|W}, Q_{X_1|W}, Q_{X_2|W}$ of the jointly Gaussian RVs. Furthermore, if the RV  $W$ makes $X_1$ and $X_2$ conditionally independent, then the corresponding subset of the achievable rate region, is simpler, and  parametrized  by only the two  conditional covariances $Q_{X_1|W}, Q_{X_2|W}$. The paper also includes the characterization of the Pangloss plane of the Gray-Wyner rate region along with the characterizations of the corresponding rate distortion functions, their test-channel distributions, and  structural properties of the realizations which induce these distributions.
\end{abstract}

\section{Introduction,  Literature Review, Main Results}
\label{sec.Intro}
  Gray and Wyner in \cite{gray-wyner:1974} considered the {\it the simple network} of Fig. \ref{fig:gray}, and characterized the achievable (lossy)  rate region,     for  an arbitrary tuple of sources, modeled by jointly independent  random variables (RVs)  $(X_1^N, X_2^N)= \big\{(X_{1,t}, X_{2,t}): t=1,2, \ldots,N\big\}$,  with two distortions functions at the decoders. 
They characterized the   operational rate region, denoted by  ${\cal R}(\Delta_1, \Delta_2)$,    by a coding scheme that uses an  auxiliary RV,  $W: \Omega \rar {\mathbb W}$, where  ${\mathbb W}$ is an arbitrary space, via   the family of probability distributions ${\cal P}$ induced by $(X_1, X_2, W)$ on their corresponding measurable spaces ${\mathbb X}_1 \times {\mathbb X}_2\times {\mathbb W}$ defined by, 
\begin{align*}
{\cal P} \tri
    \Big\{ 
          {\bf P}_{X_1, X_2, W}
          \Big|  \; \mbox{the $(X_1,X_2)$-marginal distr. is  ${\bf P}_{X_1, X_2}$}.
      \Big\}
\end{align*}
Specifically, in \cite[Theorem 8]{gray-wyner:1974}, they defined, for each ${\bf P}_{X_1, X_2, W} \in {\cal P}$ and  distortions $\Delta_1 \geq 0, \Delta_2 \geq 0$, the subset of the  Euclidean $3$-dimensional  space,  
\begin{align}
  {\cal R}^{{\bf P}_{X_1,X_2,W}}
   (\Delta_1, \Delta_2) = 
    &\Big\{
      \big(
        R_0,R_1,R_2
      \big)
         \in \mathbb{R}_+^3 
      \Big| \ \  R_0 \geq I(X_1, X_2; W),
     \nonumber \\ 
  & 
      R_1 \geq R_{X_1|W}(\Delta_1), \ \ R_2 \geq R_{X_2|W}(\Delta_2) 
    \Big\}, \label{eq_32} 
\end{align} 
where  $\mathbb{R}_+\tri [0,\infty)$, and    $R_{X_i|W}(\Delta_i)$ 
 denotes the conditional
rate distortion function (RDF) of $X_i$, conditioned on $W$, at decoder $i$, for   $i=1,2$. The Gray-Wyner achievable operational lossy rate region  ${\cal R}(\Delta_1, \Delta_2)$  is then characterized by,  ${\cal R}(\Delta_1, \Delta_2) ={\cal R}^{*}(\Delta_1, \Delta_2)$, where,
\begin{figure}[t]
    \centering
    \includegraphics[width = \columnwidth ,height = 3cm]{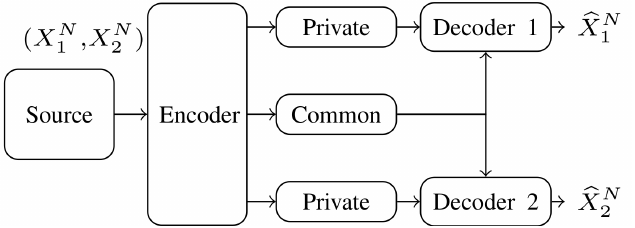}
      \caption{ The Gray-Wyner network \cite{gray-wyner:1974} with Private Rates $R_1$ and $R_2$ and Common Rate $R_0$. }
\label{fig:gray}\vspace{-0.4cm}
\end{figure}
\begin{align}
{\cal R}^{*}(\Delta_1, \Delta_2) \tri \Big(\bigcup_{ {\bf P}_{X_1,X_2, W} \in {\cal P}} {\cal R}^{{\bf P}_{X_1,X_2,W}}(\Delta_1, \Delta_2)\Big)^\mathrm{c}, \label{eq_32_a} 
\end{align}
and where $(\cdot)^\mathrm{c}$ denotes the closure of the  indicated set. \\
Gray and Wyner also  proved that,     ${\cal R}(\Delta_1, \Delta_2)$, can be alternatively determined  from  \cite[(4) of page 1703,  eqn(42)]{gray-wyner:1974},\vspace{-0.1cm}
\begin{align}
T(\alpha_1, \alpha_2) = \inf_{{\bf P}_{X_1,X_2, W} \in {\cal P}} \Big\{&I(X_1, X_2; W)+\alpha_1 R_{X_1|W}(\Delta_1) \nonumber \\
& + \alpha_2 R_{X_2|W}(\Delta_2)\Big\}, \label{eqn_new_3}
\end{align}
where $0\leq \alpha_i\leq 1,\; i=1,2$ and $\alpha_1+\alpha_2\geq 1$.\\
Moreover, it was shown in \cite[Theorem~6]{gray-wyner:1974}, that if a triple $(R_0, R_1, R_2)$ lies on the Gray-Wyner lossy rate region, i.e.,  $(R_0, R_1, R_2) \in {\cal R}(\Delta_1, \Delta_2)$, then it satisfies the following bounds \vspace{-0.5cm}
\begin{align}
& R_0 + R_1 +R_2 \geq R_{X_1, X_2}(\Delta_1, \Delta_2),   \label{eq_32a}  \\
& R_0 +R_1  \geq R_{X_1}(\Delta_1), \hso  R_0 +R_2  \geq R_{X_2}(\Delta_2), \label{eq_32c} 
\end{align}
where  $R_{X_1,X_2}(\Delta_1,\Delta_2)$ is the joint RDF of $(X_1,X_2)$ with  joint decoding of $(X_1, X_2)$ by $(\widehat{X}_1, \widehat{X}_2)$, and $R_{X_i}(\Delta_i)$ is the marginal RDF of $X_i$ at decoder $i$, for $i=1,2$. 
The set of rate triples $(R_0, R_1, R_2) \in {\cal R}(\Delta_1, \Delta_2)$ 
 which satisfy  $\sum_{i=1}^3 R_i= R_{X_1, X_2}(\Delta_1, \Delta_2)$ 
 is called the  {\em Pangloss Plane}. A rate-triple $(R_0,R_1,R_2) \in  {\cal R}(\Delta_1,\Delta_2)$ that lies on the Pangloss plane was computed by the authors in \cite[Section~2.5, (B)]{gray-wyner:1974} making use Gray's \cite{gray:1973,gray-wyner:1974} compound joint RDF of $(X_1, X_2)$, for a tuple of scalar-valued  Gaussian RVs with square-error distortions. 

More recently, progress is reported by Viswanatha, Akyol and Rose \cite{viswanatha:akyol:rose:2014}, and Xu, Liu, and Chen \cite{xu:liu:chen:2016:ieeetit}, on the  characterization of the lossy common information which is defined as the minimum common message rate $R_0$   on  the Gray-Wyner lossy rate region, when the sum rate is arbitrary close to the joint RDF.  The lossy common information is computed  for a tuple of scalar-valued  Gaussian RVs, ${\mathbb X}_i={\mathbb R}, i=1,2$, with square-error distortions, by  making use of  Xiao's and Luo's \cite{xiao-luo:2005}  closed-form expression of the joint RDF of a tuple of scalar-valued Gaussian RVs.  Other related investigations are, Wyner's common information \cite{wyner:1975} which is the lossless counterpart of the lossy common information, and related papers by  Witsenhausen  \cite{witsenhausen:1976b,witsenhausen:1975:pairsrv}, G\'{a}cs and K\"{o}rner \cite{gacs-korner:1973}, Satpathy and Cuff  \cite{satpathy:cuff:2012}, Veld and Gastpar \cite{veld-gastpar:2016} and Sula and Gastpar 
\cite{sula:gastpar:2019:arxiv}. Previous work of the authors on the   Gray-Wyner rate region is found in \cite{charalambous:schuppen:2019:arxiv},  \cite{charalambous2020characterization}. 

However, the fundamental problem of characterizing ${\cal R}(\Delta_1, \Delta_2) $ for multivariate sources, remains to this date an open problem.

{\it Main Results of this Paper. }
 We characterize ${\cal R}(\Delta_1, \Delta_2) $, and  provide an answer to a long standing open problem,   for sources modeled by a  tuple of multivariate  jointly independent and identically distributed Gaussian RVs, with respect to square-error distortions,  i.e., \vspace{-0.1cm} 
\begin{align}
&X_{i} : \Omega \rightarrow {\mathbb R}^{p_i}= {\mathbb X}_i, \;   D_{X_i} (x_i, \widehat{x}_i)\tri    ||x_{i}-\widehat{x}_{i}||_{{\mathbb R}^{p_i}}^2, \; i=1,2
 \label{dist_1} \\
&{\bf P}_{X_{1}, X_{2}} \hso \mbox{is a jointly Gaussian distribution}\label{dist_3}
\end{align} 
where $p_1,p_2$ are finite positive integer numbers and $||\cdot||_{{\mathbb R}^{p_i}}^2$ are Euclidean distances on ${\mathbb R}^{p_i}, i=1,2$.

The paper includes  the following main results.\\
(1) Theorem~\ref{thm_par_g} and Theorem~\ref{them_g}   which state that: \\
(i) The Gray-Wyner lossy rate region ${\cal R}(\Delta_1, \Delta_2)$ and $T(\alpha_1,\alpha_2)$ are characterized by replacing the set ${\cal P}$ in \eqref{eq_32_a} and \eqref{eqn_new_3}  with the subset ${\cal P}^\mathrm{G}$ of Gaussian distributions defined by, \vspace{-0.1cm}
 \begin{align}
{\cal P}^\mathrm{G} \triangleq & \Big\{ {\bf P}_{X_1, X_2, W}\Big|  X_i : \Omega \rightarrow  {\mathbb R}^{p_i}, i=1,2,  \: W : \Omega \rightarrow  {\mathbb R}^n, \nonumber \\
& \mbox{ $(X_1,X_2) \in G(0, Q_{(X_1,X_2)})$ is fixed}, \: W \in G(0, Q_W),   \nonumber \\
&  {\bf P}_{X_1, X_2, W}\in  G(0, Q_{(X_1,X_2,W)}) 
\Big\} \subset {\cal P},    \label{eqn_new_5}\\
 & Q_{(X_1,X_2,W)} =
  \left(
  \begin{array}{lll}
    Q_{X_1}   & Q_{X_1,X_2}   & Q_{X_1,W} \\
    Q_{X_1,X_2}^T & Q_{X_2}   & Q_{X_2,W} \\
    Q_{X_1,W}^T   & Q_{X_2,W}^T   & Q_W
  \end{array}
  \right).   \label{jdf_g}
\end{align}
The notation, ${\bf  P}_{X_1,X_2, W} \in G(0, Q_{(X_1,X_2,W)})$ means the joint distribution is the Gaussian distribution of the vector $(X_1^{T} \; X_2^T\; W^T)^T$, with zero mean and covariance matrix $Q_{(X_1,X_2,W)}$.  \\
(ii) The Gray-Wyner lossy rate region  ${\cal R}(\Delta_1, \Delta_2)$ and $T(\alpha_1,\alpha_2)$ are parametrized  by the conditional covariances $Q_{X_1,X_2|W}, Q_{X_i|W}, i=1,2$, where $W: \Omega \rar {\mathbb R}^n$ is a Gaussian RV. \\
 (2) Corollary~\ref{cor_sgy}, which states that, if   $W$ makes  $X_1$ and $X_2$ conditional independent, then the  subset of ${\cal R}(\Delta_1, \Delta_2)$,  is    parametrized   only with respect to $ Q_{X_i|W}, i=1,2$.\\
  (3) Theorem~\ref{wyner_common_g} and Theorem~\ref{th:crdf_g} that identify achievable lower bounds on  the mutual information $I(X_1,X_2;W)$ and RDFs $R_{X_i|W}(\Delta_i), i=1,2,\;R_{X_1,X_2|W}(\Delta_1, \Delta_2)$. Specifically, it is shown that,  among all triples of RVs $(X_1,X_2, W)$, with arbitrary $W : \Omega \rightarrow {\mathbb W} $,  these lower bounds are achieved  if 
 $(X_1,X_2, W)$ are jointly Gaussian, i.e., $W : \Omega \rightarrow {\mathbb R}^n $ is a Gaussian RV. Moreover, realizations of $(X_1,X_2,\widehat{X}_1, \widehat{X}_2,W)$ that achieve the aforementioned bounds are also provided. \\
(4) Theorem \ref{thm_pangloss} which characterize the \emph{Pangloss plane} of the Gray-Wyner network by identifying conditions on the joint distribution $\mathbf{P}_{X_1,X_2,\widehat{X}_1, \widehat{X}_2,W}$ such that $R_0+R_1+R_2 = R_{X_1,X_2}(\Delta_1,\Delta_2)$.

 However, further
research is required to carry out the remaining optimizations and computations, which are involved in  the characterizations of the ${\cal R}(\Delta_1, \Delta_2)$ and $T(\alpha_1,\alpha_2)$. These calculations are expected to be challenging, because they require  closed-form expressions of the RDFs, $R_{X_i|W}, i=1, 2, R_{X_1,X_2}(\Delta_1,\Delta_1)$, and the structural properties of their test-channel realizations   \cite{evagoras-cdc-tc:2021,gkangos-charalambous:2021,gkagkos2020structural}.

\section{Parametrization of Gray and Wyner Rate Region of Gaussian RVs}
\label{sect:wcilossy}
{\it Notation.} { $\mathbb{Z}_+ = \{ 1,2, \ldots, \}$,  
 $\mathbb{N} = \{ 0,1,2, \ldots, \}$.}
Denote the real numbers by
$\mathbb{R}$. For $x \in {\mathbb R},  (x)^+\tri \max\{1,x\}$
The vector space of $n$-tuples of real numbers is denoted 
by $\rn$.
Denote the Borel $\sigma$-algebra on this vector space
by $B(\mathbb{R}^n)$
hence $(\mathbb{R}^n,B(\mathbb{R}^n))$
is a measurable space. The expression $\mathbb{R}^{n \times m}$
denotes the set of $n$ by $m$ matrices with elements
in the real numbers, for $n, ~ m \in \zpos$.  An {\em $\rn$-valued Gaussian} RV, is denoted by $X \in G(m_X, Q_X)$, where $m_X \in \rn$ is the
 {\em mean value},   and 
$Q_X \in \rnn$, $Q_X = Q_X^T \succeq 0$  the {\em variance}.
The {\it effective dimension} of the RV is denoted by 
$\dim (X) = \rank (Q_X)$. An $n\times n$ identity matrix is denoted by $I_n$. 
For a tuple of Gaussian RVs $(X_1,X_2) \in G(0,Q_{(X_1,X_2)})$, its variance matrix $Q_{(X_1,X_2)}$  is defined as in    (\ref{jdf_g}), with $W$ removed. 
The variance $Q_{(X_1,X_2)}$ is distinguished 
from $Q_{X_1,X_2}$.

In order to prove our main results, i.e., the characterizations of ${\cal R}(\Delta_1, \Delta_2)$ and $T(\alpha_1,\alpha_2)$ in Theorem \ref{thm_par_g} and Theorem \ref{them_g},  we need two intermediate results. Specifically, lower bounds on the mutual information  $I(X_1,X_2;W)$ and the conditional RDFs $R_{X_i|W}(\Delta_i),
\;i=1,2$, and realizations that achieve these bounds. We begin with the achievable lower bounds on $I(X_1,X_2;W)$.

 \begin{theorem}
 \label{wyner_common_g}
Consider a tuple of Gaussian RVs
$X_i: \Omega \rightarrow \mathbb{R}^{p_i}$, 
with  $(X_1 , X_2 ) \in G(0,Q_{(X_1,X_2)})$ such that $Q_{(X_1,X_2)}\succ 0$ (which implies $Q_{X_i} \succ 0$ for $ i=1, 2$).
 Let  $W: \Omega \rightarrow {\mathbb W}$ be any {\em auxiliary RV},  with $({\mathbb W}, B({\mathbb W}))$ an arbitrary measurable space, and     ${\bf P}_{X_1, X_2, W}$ any joint  probability distribution of the triple $(X_1, X_2,W)$ on the product space
$(\mathbb{R}^{p_1} \times \mathbb{R}^{p_2}\times {\mathbb W}, 
  B(\mathbb{R}^{p_1}) \otimes B(\mathbb{R}^{p_2})\otimes B({\mathbb W}))$      with $(X_1, X_2)$-marginal ${\bf P}_{X_1, X_2}$ the Gaussian distribution ${\bf P}_{X_1, X_2}=G(0,Q_{(X_1,X_2)})$.  Define the RVs,    by, 
\begin{align}
  \begin{pmatrix}  
              Z_1\\Z_2   
            \end{pmatrix} \tri    \begin{pmatrix}
          X_1 \\X_{2}  
        \end{pmatrix}          
    -  {\bf E}\Big[          
          \begin{pmatrix}
            X_1 \\X_{2}  
          \end{pmatrix} \Big|
           W\Big], \; Z_i : \Omega \rightarrow {\mathbb R}^{p_i}.          
 \label{pr_g_7}
 \end{align}
(a)   There exists a Gaussian measure ${\bf P}_1=G(0,Q_{(X_1,X_2,W)})$ defined on the space $(\mathbb{R}^{p_1} \times \mathbb{R}^{p_2} \times  \mathbb{R}^{n}, B(\mathbb{R}^{p_1})\otimes B(\mathbb{R}^{p_2})\otimes B(\mathbb{R}^{n})), n \in \mathbb{Z}_+$ associated with the Gaussian RVs $(X_1, X_2)$, $W:\Omega \rightarrow {\mathbb R}^n,\; W \in G(0, Q_W),\;Q_W \succ0$  such that ${\bf P}_1|_{\mathbb{R}^{p_1} \times \mathbb{R}^{p_2}}=G(0,Q_{(X_1,X_2)})$. Moreover, a  realization of the RVs $(X_1, X_2, W)$ with induced measure ${\bf P}_1=G(0,Q_{(X_1,X_2,W)})$ is\footnote{{If $Q_W\succeq 0$,  then by the theory of Gaussian RVs, one needs to replace $Q_W^{-1}$ by pseudoinverse $Q_W^\dagger$, and has the option to  use the minimum realizations discussed in \cite{charalambous:schuppen:2019:arxiv}}.},
\begin{align}
   &     \begin{pmatrix}
          X_1 \\X_{2}  
        \end{pmatrix}          
    =  Q_{(X_1,X_2), W}Q_W^{-1} W + 
            \begin{pmatrix}  
              Z_1\\Z_2   
            \end{pmatrix} 
            ,   \label{ral_g1} \\
     & (Z_1,Z_2) \in G(0, Q_{(Z_1, Z_2)}), \; (Z_1, Z_2) \; 
          \mbox{indep. of} \ \ W,  \label{ral_g2}   \\
     &   Q_{(Z_1, Z_2)}= Q_{(X_1,X_2)}- Q_{(X_1,X_2), W} Q_W^{-1} Q_{(X_1,X_2), W}^T \succ 0,    \label{ral_g2_a}    \\   
    &    Q_{(X_1,X_2), W}
    =  {\bf E}\Big[          
          \begin{pmatrix}
            X_1 \\X_{2}  
          \end{pmatrix} 
           W^T\Big].    \label{ral_g2_a_n}    
\end{align}
(b)  Consider  $I(X_1,X_2;W)\in [0,\infty)$, for arbitrary RV $W$. Then the   inequalities hold. 
 \begin{align}
 &I(X_1, X_2;W)= H(X_1,X_2) - H(X_1, X_2|W)   \label{pr_g_1}\\
 &= H(X_1,X_2) - H(X_1 -{\bf E}[X_1| W], X_2-{\bf E}[X_2|W]  |W)    \label{pr_g_4}
 \end{align}
 \begin{align}
& \geq  H(X_1,X_2) - H(X_1 -{\bf E}[X_1|W] ,X_2-{\bf E}[X_2|W])  \label{pr_g_5}\\
& =  H(X_1,X_2) - H(Z_1 ,Z_2),   \hst \mbox{$Z_i$ defined by (\ref{pr_g_7})}   \label{pr_g_6}\\
& \geq H(X_1,X_2) - H(Z_1, Z_2),  \;  \mbox{if $(Z_1,Z_2) \in G(0, Q_{(Z_1,Z_2)})$} \label{pr_g_7_a}\\
& = \frac{1}{2} \ln \Big( \frac{\det(Q_{(X_1, X_2)})}{\det(Q_{(Z_1, Z_2) }}\Big)^+  \label{pr_g_7_aa_n} 
 \end{align} 
 where $Q_{(Z_1,Z_2)}\succ 0$.
 If,  \\
 (i) $W: \Omega \rightarrow {\mathbb W}={\mathbb R}^n$ $n\in {\mathbb Z}_+$,  Gaussian RV, and \\
 (ii)    $(Z_1, Z_2, W)$ are  jointly Gaussian RVs,\\
  then all   inequalities in   (\ref{pr_g_1})-(\ref{pr_g_7_a}) hold with equality, and   $(X_1,X_2, W)$ induces a family of joint  probability distributions     ${\bf P}_{X_1, X_2, W}$  with $(X_1, X_2)-$marginal ${\bf P}_{X_1, X_2}=G(0,Q_{(X_1,X_2)})$.  
 \end{theorem}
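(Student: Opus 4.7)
The plan is to handle part (a) by direct construction and verification, and then prove part (b) by interpreting each step of the chain as a standard entropy inequality, identifying the equality conditions as we go.

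For part (a), I would first observe that for any $n \in \mathbb{Z}_+$, any cross-covariance $Q_{(X_1,X_2),W}$ and any $Q_W \succ 0$ satisfying the Schur-complement condition $Q_{(X_1,X_2)} - Q_{(X_1,X_2),W} Q_W^{-1} Q_{(X_1,X_2),W}^T \succ 0$, the block matrix $Q_{(X_1,X_2,W)}$ in (\ref{jdf_g}) is positive definite and is therefore the covariance of a bona fide Gaussian measure ${\bf P}_1$ on $\mathbb{R}^{p_1} \times \mathbb{R}^{p_2} \times \mathbb{R}^n$ whose $(X_1,X_2)$-marginal equals $G(0, Q_{(X_1,X_2)})$. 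Defining $(Z_1,Z_2)$ by (\ref{ral_g1}), a direct computation of ${\bf E}[(Z_1^T,Z_2^T)^T W^T] = Q_{(X_1,X_2),W} - Q_{(X_1,X_2),W} Q_W^{-1} Q_W = 0$ shows the innovation is uncorrelated with $W$, hence independent by joint Gaussianity, which establishes (\ref{ral_g2}); (\ref{ral_g2_a}) follows by another bilinear calculation, and (\ref{ral_g2_a_n}) is immediate from the linear-predictor formula ${\bf E}[(X_1^T,X_2^T)^T \mid W] = Q_{(X_1,X_2),W} Q_W^{-1} W$.

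For part (b), equation (\ref{pr_g_1}) is the definition of mutual information. Equation (\ref{pr_g_4}) holds because, conditional on $W$, the quantity ${\bf E}[X_i \mid W]$ is a measurable function of $W$ and therefore acts as a constant shift, leaving conditional differential entropy invariant. Inequality (\ref{pr_g_5}) is the standard fact that conditioning does not increase differential entropy, applied to the innovation pair. Step (\ref{pr_g_6}) is just the definition in (\ref{pr_g_7}). For (\ref{pr_g_7_a}) I would invoke the maximum-entropy property of the Gaussian: among all distributions on $\mathbb{R}^{p_1+p_2}$ with a prescribed covariance $Q_{(Z_1,Z_2)}$, the jointly Gaussian one uniquely maximizes differential entropy, so replacing $(Z_1,Z_2)$ by its Gaussian surrogate can only increase $H(Z_1,Z_2)$ and thus decrease $H(X_1,X_2) - H(Z_1,Z_2)$. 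Finally, (\ref{pr_g_7_aa_n}) follows from the closed-form Gaussian entropy formula $H(\cdot) = \frac{1}{2}\ln((2\pi e)^{p} \det Q)$ applied to both $(X_1,X_2)$ and $(Z_1,Z_2)$, the dimensional constants cancelling, with the $(\cdot)^+$ wrapper preventing a negative lower bound.

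For the equality claim I would check that when $W \in G(0, Q_W)$ is Gaussian and $(Z_1,Z_2,W)$ are jointly Gaussian: by part (a) the pair $(Z_1,Z_2)$ is independent of $W$, making (\ref{pr_g_5}) an equality; $(Z_1,Z_2)$ is itself Gaussian, making (\ref{pr_g_7_a}) an equality; and all entropies in the chain are Gaussian, so (\ref{pr_g_7_aa_n}) is attained. The main obstacle I foresee is the treatment of possibly singular covariances: the hypothesis $Q_{(X_1,X_2)} \succ 0$ together with the explicit assumption $Q_{(Z_1,Z_2)} \succ 0$ stated after (\ref{pr_g_7_aa_n}) keeps the differential entropies finite, but to extend achievability into the regime where $W$ deterministically reconstructs part of $(X_1,X_2)$ (relevant to the Pangloss-plane analysis of Theorem \ref{thm_pangloss}) one must invoke the pseudoinverse $Q_W^\dagger$ and the minimum realizations alluded to in the footnote following (\ref{ral_g2_a_n}).
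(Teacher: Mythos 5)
Your proposal is correct and follows essentially the same route as the paper's proof: part (a) is the same constructive argument (the paper delegates the details of the Gaussian realization to its cited references, which you simply spell out via the Schur-complement and orthogonality-of-the-innovation computations), and part (b) identifies each link in the chain \eqref{pr_g_1}--\eqref{pr_g_7_aa_n} with exactly the same facts the paper invokes, namely translation invariance of conditional differential entropy, conditioning reduces entropy, the definition \eqref{pr_g_7}, the maximum-entropy property of the Gaussian, and the closed-form Gaussian entropy, with the same equality conditions under (i) and (ii). Your closing remark on singular $Q_W$ and the pseudoinverse matches the paper's footnote and is not needed for the theorem as stated.
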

\begin{proof}
 (a) This is constructive, and follows from the realization of RVs  that induce Gaussian measures, as  presented in  \cite{charalambous:schuppen:2019:arxiv} and \cite{charalambous2020characterization}. (b) Inequality (\ref{pr_g_4}) is due to a property of conditional entropy, (\ref{pr_g_5}) is due to    conditioning reduces entropy, (\ref{pr_g_6}) is by   definition (\ref{pr_g_7}), and   (\ref{pr_g_7_a}),  is due to  maximum entropy principle. Then  (\ref{pr_g_7_aa_n}) follows by calculation of the entropies.  To show the last statement, use   (\ref{pr_g_7}),     $X_1 ={\bf E}[X_1|W]+Z_1,\; X_2 ={\bf E}[X_2|W]+Z_2$ and $ (X_1,X_2) \in G(0,Q_{(X_1,X_2)})$ (where $(Z_1,Z_2)$ are correlated);  if  (i) and (ii) hold,  then all inequalities hold with equalities, and the statements are easily verified.      
\end{proof} 
 
 \begin{remark} It is known from Gray and Wyner \cite{gray-wyner:1974} that  rate triples $(R_0, R_1, R_2) \in {\cal R}(\Delta_1, \Delta_2)$, with $R_0=0$ are also achievable; hence not surprizing is the fact that the parametrization in   Theorem~\ref{wyner_common_g},   of $I(X_1,X_2;W)$,  with respect to $W$  also includes $I(X_1,X_2;W)=0$, i.e., $W$  generates zero information, by replacing  $Q_W^{-1}$  by the pseudoinverse $Q_W^\dagger$,  or $W$ is  independent of   $(X_1,X_2)$. 
 \end{remark}

The proof of Theorem~\ref{wyner_common_g} does not pre-suppose that $W$ is a Gaussian RV, and the achievable lower bounds are parametrized by the joint distribution ${\bf P}_{X_1,X_2,W}$.  Hence, it is fundamentally different from Corollary 1 in  \cite{satpathy:cuff:2012}, that deals with minimizing $I(X_1,X_2;W)$ subject to $W$ that makes $X_1$ and $X_2$ conditionally independent, and  makes use  of  RDFs of Gaussian RVs with square-error distortion functions (strictly speaking one needs to prove Gaussian is optimal). 

Next we prove Theorem~\ref{th:crdf_g},   which gives  lower bounds 
on  $R_{X_i|W}(\Delta_i)$ for arbitrary $W : \Omega \rightarrow \mathbb{W}$,  $(X_1 , X_2 ) \in G(0,Q_{(X_1,X_2)})$, and    square-error distortions,
which are achievable if,   ${\bf P}_{\widehat{X}_i,X_i,W}$ is jointly Gaussian,   $W : \Omega \rightarrow \mathbb{R}^n, n \in {\mathbb Z}_+$, is Gaussian,  and a certain structural property of a realization holds.  

\begin{theorem} 
\label{th:crdf_g}  Consider the  conditional  RDFs $R_{X_i|W}(\Delta_i),i=1,2$,  for a triple of   RVs 
$X_i: \Omega \rightarrow \mathbb{R}^{p_i}, i=1,2$, $W : \Omega \rightarrow \mathbb{W}$, where  $W$  is a continuous, countable, or finite valued RV, with joint distribution ${\bf P}_{X_1, X_2,W}$ such that   the marginal distributions ${\bf P}_{X_1, X_2}$ and ${\bf P}_{X_i}$ for $i=1,2$ are Gaussian, i.e., $(X_1 , X_2 ) \in G(0,Q_{(X_1,X_2)})$,  $Q_{(X_1,X_2)}\succ 0$, 
and   square error distortion functions $D_{X_i} (x_i, \widehat{x}_i)= ||x_{i}-\widehat{x}_{i}||_{{\mathbb R}^{p_i}}^2, i=1,2$. Then the following hold.\\
(a) For  arbitrary RV, $W : \Omega \rightarrow \mathbb{W}$, $I(X_i; \widehat{X}_i|W)$ satisfies
\begin{align}
I(X_i; \widehat{X}_i|W)\geq I(X_i; {\bf E}[X_i|\widehat{X}_i,W]|W), \ \   i=1,2 \label{crdf_G1}
\end{align}
and the mean square error satisfies for $i=1,2$, 
\begin{align}
{\bf E}[D_{X_i}(X_{i},\widehat{X}_{i})] \geq {\bf E}[D_{X_i}(X_{i},{\bf E}[X_i|\widehat{X}_{i},W] )]. \label{crdf_G2}
\end{align}
If there exists a realization  $\widehat{X}_i$ of  the  test channel   distribution ${\bf P}_{\widehat{X}_i|X_i,W}$,  such that  the joint distribution  ${\bf P}_{\widehat{X}_i,X_i,W}$ satisfies,
\begin{align}
\widehat{X}_i^{\mathrm{cm}}\tri  {\bf E}[X_i| \widehat{X}_{i},W]={\bf E}[X_i|\widehat{X}_{i}]=\widehat{X}_i\text{-a.s.}, \ \ \ \ i=1,2. \label{crdf_G3}
\end{align}
then inequalities in (\ref{crdf_G1}), (\ref{crdf_G2}) hold with equality. 
 Moreover,  the following lower bounds hold.
\begin{align}
     & I(X_i;\widehat{X}_i|W) 
       \geq  \int_{{\mathbb W}} I(X_i;\widehat{X}_i^{\mathrm{cm}}|W=w){\bf P}_W    \label{cond_in_1_a}   \\
   &  \geq  \inf_{w \in {\mathbb W} }I(X_i;\widehat{X}_i^{\mathrm{cm}}|W=w)\geq 0 , \ \ i=1,2,  
             \label{cond_in_1}\\
  &     {\bf E}[   D_{X_i}(X_i,\widehat{X}_i)   ]   
   =  \int_{\mathbb W} 
          \left( 
            \int_{{{\mathbb R}^{p_1} \times {\mathbb R}^{p_2}}}  
                    D_{X_i}(x_i,\widehat{x}_i) 
                    {\bf P}_{X_i, \widehat{X}_i|W} 
          \right) {\bf P}_W     \nonumber\\      
 & =  \int_{{\mathbb W}} \Delta_i(w) {\bf P}_{W}, \hso \Delta_i(w) \tri {\bf E}[ D_{X_i}(X_i,\widehat{X}_i)\big|W=w]   \nonumber \\
 &  \geq   \int_{{\mathbb W}} \Delta_i^{\mathrm{cm}}(w) {\bf P}_{W}, \hso \Delta_i^{\mathrm{cm}}(w) \tri {\bf E}[ D_{X_i}(X_i,\widehat{X}_i^{\mathrm{cm}})\big|W=w]   \label{cond_in_2_b} \\
 & \geq 
          \inf_{w \in {\mathbb W} } \Delta_i^{\mathrm{cm}}(w) \geq 0,  \; i=1,2. \label{cond_in_2}       
\end{align}
Inequalities  (\ref{cond_in_1}),   (\ref{cond_in_2}) are achieved  if, {(a.i) (\ref{crdf_G3}) hold}, and  (a.ii) {the} mutual information  $I(X_i;\widehat{X}_i^{\mathrm{cm}}|W=w)$ and $\Delta_i^{\mathrm{cm}}(w)$ for $i=1,2$,  are independent of $w \in {\mathbb W}$.\\
(b) The  RDF $R_{X_i|W}(\Delta_i)\in [0,\infty), i=1,2$ satisfies,  
\begin{align}
       \hspace{-0.1cm}    R_{X_i|W}&(\Delta_i) 
     =\inf_{\{ \Delta_i(w)| w \in {\mathbb W}\}:    {\bf E}[ \Delta_i(W) ]
    \leq    \Delta_i}  \int R_{X_i|W=w}(\Delta_i(w)){\bf P}_W  \label{cond_in_3_new} \\
  &\geq   \inf_{\{ \Delta_i^{\mathrm{cm}}(w)| w \in {\mathbb W}\}:    {\bf E}[ \Delta_i^{\mathrm{cm}}(W) ]
    \leq    \Delta_i}  \int R_{X_i|W=w}(\Delta_i^{\mathrm{cm}}(w)){\bf P}_W  \label{cond_in_3_new_a} \\
    &\geq   \inf_{\{ \Delta_i^{\mathrm{cm}}(w)| w \in {\mathbb W}\}:    {\bf E}[ \Delta_i^{\mathrm{cm}}(W) ]
    \leq    \Delta_i}  R_{X_i|W=w}(\Delta_i^{\mathrm{cm}}(w)) \label{cond_in_3_new_b}
\end{align}
where $R_{X_i|W=w}(\Delta_i(w))$  is the RDF calculated for the distribution ${\bf P}_{X_i|W}$, for fixed $W=w$, and the infimum in (\ref{cond_in_3_new})  is over the  sets $\{\Delta_i(w)\big|w \in {\mathbb W}\}$ that satisfy the average constraint,  $\int_{{\mathbb W}} \Delta_i(w) {\bf P}_W\leq \Delta_i$ for  $i=1,2$. \\
 The lower bound (\ref{cond_in_3_new_b}), and the lower bounds of part (a)  i.e.,   (\ref{cond_in_1}),   (\ref{cond_in_2}),  are achieved if, \\
(b.i)  $W:\Omega \rightarrow {\mathbb R}^n$ is Gaussian,  and\\
(b.ii)  the joint distribution  ${\bf P}_{\widehat{X}_i,X_i,W}$ induced by $(\widehat{X}_i,X_i,W)$ is jointly Gaussian and  {(\ref{crdf_G3}) hold}, for $i=1,2$.\\
{The} RDFs $R_{X_i|W}(\Delta_i)\in [0,\infty), i=1, 2$ are characterized~by 
\bea
       R_{X_i|W}(\Delta_i) 
   =    \inf_{ \Delta_i^{\mathrm{cm}} \leq  \Delta_i   } R_{X_i|W=w}(\Delta_i^{\mathrm{cm}}) ,  \;  \Delta_i^{\mathrm{cm}}(w)=\Delta_i^{\mathrm{cm}}, \forall w\in {\mathbb W}.\label{cond_in_3}
\eea
 (c) {A realization that achieves the lower bounds of parts (a) and (b),  is the  Gaussian realization of $(X_i,W, \widehat{X}_i)$, parametrized by $(H_i, Q_{V_i}, Q_W)$,   given below.}
\begin{align*}
& \widehat{X}_i = H_i X_i + \left(I_{p_{i}}-H_i \right)Q_{X_i,W}Q_W^{-1}W +V_i, \ \ i=1,2,   \\
&V_i \in G(0, Q_{V_i}), \ \ V_i \ \ \mbox{independent of} \ \ (X_i,W) \\
&H_iQ_{X_i|W}=Q_{X_i|W} - Q_{E_i}\succeq 0, \; H_iQ_{X_i|W}= Q_{X_i|W}H_i^T, \\
&Q_{V_i}=H_i Q_{X_i|W}- H_i Q_{X_i|W} H_i^T\succeq 0, \hso    Q_{X_i|W}\succeq 0, 
\\
&E_i = X_i-\widehat{X}_i, \ \   E_i \in G(0,Q_{E_i}) , \hso Q_{E_i}\succeq 0,
\end{align*}
{i.e., satisfies satisfies the structural property (\ref{crdf_G3}),   for $i=1,2$. The  characterization of the  RDF $R_{X_i|W}(\Delta_i), i=1, 2$  is} 
\begin{align}
R_{X_i|W}(\Delta_i) & {=}  \inf_{{\bf P
}_{\widehat{X}_i|X_i,W}: {\bf E}\big[||X_i-\widehat{X}_i||_{{\mathbb R}^{p_i}}^2\big]\leq \Delta_i} I(X_i;\widehat{X}_i|W)\in [0,\infty)\nonumber \\
&=
\inf_{\trace(Q_{E_i}) \leq \Delta_i} \frac{1}{2} \ln \Big(  \frac{\det(Q_{X_i|W})}{\det(Q_{E_i})} \Big)^+, \label{conRDF_g}  \\
&\hspace{-1.4cm}\mbox{such that} \ \ Q_{\widehat{X}_i|W}=Q_{X_i|W}- Q_{E_i}\succeq 0,  \ \ Q_{E_i}\succ 0,
\end{align}
for $i=1,2$, where the test channel distribution ${\bf P}_{\widehat{X}_i|X_i,W}$ or the joint distribution  ${\bf P}_{\widehat{X}_i, X_i,W}$ is induced by the above realization. 
\end{theorem}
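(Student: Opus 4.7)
The plan is to dispose of parts (a), (b), (c) in order, recycling the same kind of realization argument that drove Theorem~\ref{wyner_common_g}: replace an arbitrary conditional distribution by a Gaussian one using MMSE-projection and maximum entropy, then carry out the resulting closed-form Gaussian computation.

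For part (a), I would get (\ref{crdf_G1}) and (\ref{crdf_G2}) from two standard facts. Because $\widehat{X}_i^{\mathrm{cm}}=\mathbf{E}[X_i|\widehat{X}_i,W]$ is a deterministic function of $(\widehat{X}_i,W)$, the conditional data-processing inequality applied to the chain $X_i\to(\widehat{X}_i,W)\to \widehat{X}_i^{\mathrm{cm}}$ given $W$ yields (\ref{crdf_G1}); inequality (\ref{crdf_G2}) is the MMSE-optimality of conditional expectation among square-integrable functions of $(\widehat{X}_i,W)$. Hypothesis (\ref{crdf_G3}) forces the two estimators to coincide a.s., so both inequalities collapse to equalities. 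The chain (\ref{cond_in_1_a})--(\ref{cond_in_2}) is then obtained by writing the conditional mutual information and the expected conditional distortion as Lebesgue integrals against $\mathbf{P}_W$ and replacing the integrand by its essential infimum in $w$.

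For part (b), the identity (\ref{cond_in_3_new}) is the standard slice-wise characterization of the conditional RDF, obtained by applying Jensen's inequality to the convex map $\Delta\mapsto R_{X_i|W=w}(\Delta)$ along any feasible distortion allocation $w\mapsto \Delta_i(w)$. Substituting the distortion bound (\ref{cond_in_2_b}) inside the variational form gives (\ref{cond_in_3_new_a}), and bounding the integral from below by the essential infimum of the integrand produces (\ref{cond_in_3_new_b}). Achievability of the lower chain by (\ref{cond_in_3}) will follow once part (c) supplies a jointly Gaussian test channel with Gaussian $W$ that satisfies (\ref{crdf_G3}) and renders $\Delta_i^{\mathrm{cm}}(w)$ independent of $w$.

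For part (c), I would take the proposed Gaussian realization as a definition and carry out four verifications. (i) Joint Gaussianity of $(X_i,W,\widehat{X}_i)$ with the correct $(X_i,W)$-marginal is immediate from independent additive Gaussian noise $V_i$. (ii) Writing $\widehat{X}_i-\mathbf{E}[X_i|W]=H_i(X_i-\mathbf{E}[X_i|W])+V_i$ and $E_i=(I_{p_i}-H_i)(X_i-\mathbf{E}[X_i|W])-V_i$, the symmetry $H_iQ_{X_i|W}=Q_{X_i|W}H_i^T$ together with $Q_{V_i}=H_iQ_{X_i|W}-H_iQ_{X_i|W}H_i^T$ yields $\mathbf{E}[E_i\widehat{X}_i^T\mid W]=0$ and $\mathbf{E}[E_iW^T]=0$, which by joint Gaussianity promotes orthogonality to $\mathbf{E}[X_i|\widehat{X}_i,W]=\widehat{X}_i$; a second orthogonality check $\mathbf{E}[E_i\widehat{X}_i^T]=0$ gives $\mathbf{E}[X_i|\widehat{X}_i]=\widehat{X}_i$, establishing (\ref{crdf_G3}). (iii) Property (\ref{crdf_G3}) combined with joint Gaussianity forces $Q_{X_i|\widehat{X}_i,W}=Q_{E_i}$; plugging the Gaussian entropy formula into $I(X_i;\widehat{X}_i|W)=h(X_i|W)-h(X_i|\widehat{X}_i,W)$ delivers the $w$-independent expression $\frac{1}{2}\ln\det(Q_{X_i|W})/\det(Q_{E_i})$, and $\Delta_i^{\mathrm{cm}}(w)=\trace(Q_{E_i})$ is likewise constant in $w$. (iv) The minimization over $Q_{E_i}\succ 0$ with $\trace(Q_{E_i})\le\Delta_i$ and $Q_{X_i|W}-Q_{E_i}\succeq 0$ is the characterization (\ref{conRDF_g}).

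The hard part will be the converse: ruling out that some non-Gaussian $W$ (continuous, countable, or finite) together with a non-Gaussian test channel beats the bound. My plan is to combine part (a), which already restricts attention to estimators meeting (\ref{crdf_G3}), with a maximum-entropy/Gaussianization step modeled on Theorem~\ref{wyner_common_g}: slice-wise replace $(\widehat{X}_i,W)$ by Gaussian surrogates with matched first and second moments, which can only decrease $I(X_i;\widehat{X}_i|W)$ while preserving the distortion, after which the slice-wise Gaussian RDF is known in closed form. The remaining technical nuisances are handling degenerate $Q_W$ or $Q_{X_i|W}$ via pseudoinverses, as flagged in the footnote, and justifying attainment of the infimum in (\ref{cond_in_3_new_b}) by continuity and convexity of the Gaussian RDF in the distortion level.
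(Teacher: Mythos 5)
Your proposal is correct and follows essentially the same route as the paper: data processing and MMSE optimality for (\ref{crdf_G1})--(\ref{crdf_G2}) (which the paper outsources to \cite{gkagkos2020structural,gkangos-charalambous:2021}), the classical slice-wise decomposition of the conditional RDF for (\ref{cond_in_3_new}) plus monotonicity/nonnegativity for the remaining inequalities, and direct verification that the stated Gaussian realization satisfies (\ref{crdf_G3}) and makes $I(X_i;\widehat{X}_i^{\mathrm{cm}}|W=w)$ and $\Delta_i^{\mathrm{cm}}(w)$ constant in $w$ (which the paper leaves to the reader and to \cite{gkangos-charalambous:2021}); your orthogonality computation in (c.ii) is exactly the right check. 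One clarification: the ``converse'' you flag as the hard part --- ruling out that a non-Gaussian $W$ or test channel beats the Gaussian value --- is not actually asserted by this theorem, whose claims are only the chain of lower bounds for arbitrary $W$ together with achievability \emph{if} (b.i)--(b.ii) hold; the optimization over all of ${\cal P}$ is deferred to Theorem~\ref{thm_par_g}, so your proposed Gaussianization step is unnecessary here and its absence is not a gap.
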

\begin{proof}  
 (a)  {Consider any  distribution ${\bf P}_{X_i,\widehat{X}_i, W}$, with ${\bf P}_{X_i}$ the Gaussian distribution of $X_i$. Inequalities \eqref{crdf_G1} and \eqref{crdf_G2}, and the last statement of part (a)  are  shown in  \cite{gkagkos2020structural,gkangos-charalambous:2021} (for general RVs $X_i$).}   The lower bounds  (\ref{cond_in_1_a}) and (\ref{cond_in_2_b}) follow directly  from \eqref{crdf_G1} and \eqref{crdf_G2}. {Inequalities  (\ref{cond_in_1}) and \eqref{cond_in_2} hold, since $I(X_i;\widehat{X}_i^{\mathrm{cm}}|W=w) \geq 0$ and  $\Delta_i(w) \geq 0$,  for all $w \in {\mathbb W}$. }
 Clearly, if (a.i) and (a.ii) hold, then the lower bound (\ref{cond_in_1}) and (\ref{cond_in_2}) are achieved. 
(b) Consider the RDF  $R_{X_i|W}(\Delta_i)$, for an arbitrary joint distribution ${\bf P}_{X_i,W}$ with ${\bf P}_{X_i}$ the Gaussian distribution of $X_i$. It  is well-known \cite{gray-wyner:1974}  that $R_{X_i|W}(\Delta_i)$ for $i=1,2$, are convex non-increasing functions in $\Delta_i\in (0, \infty]$,  and the average distortion constraint occurs on the boundary, for $\Delta_i \in (0, \Delta_{i,\max}]$, for some $\Delta_{i,\max}\in (0,\infty]$.  Identity   (\ref{cond_in_3_new}) is known from  \cite{gray-wyner:1974}. 
By part (a), then  $I(X_i;\widehat{X}_i|W=w) \geq I(X_i;\widehat{X}_i^{\mathrm{cm}}|W=w)$  and  $\Delta_i(w) \geq  \Delta_i^{\mathrm{cm}}(w)$ hold for all $ w\in {\mathbb W}$. By the convex non-increasing property of the RDFs,    $R_{X_i|W=w}(\Delta_i(w))\geq R_{X_i|W=w}(\Delta_i^{\mathrm{cm}}(w))$ for all $w\in {\mathbb W}$ and in addition $\Delta_i \geq {\bf E}[\Delta_i(w)] \geq {\bf E} [ \Delta_i^{\mathrm{cm}}(w)]$. Using these facts, inequality (\ref{cond_in_3_new_a}) is obtained, because the infimum is over a larger set. Next, inequality (\ref{cond_in_3_new_b}) follows because $R_{X_i|W=w}(\Delta_i^{\mathrm{cm}}(w)) \geq 0$, for all $w\in {\mathbb W}$.
 {Furthermore, if (b.i) and (b.ii) hold then by using the fact that, for a triple of   Gaussian RVs $(X, \widehat{X}_i, W)$,     conditional    mean-square errors,  and  conditional mutual informations do not depend on the realizations of the conditioning RVs,  then  $\Delta_i^{\mathrm{cm}}(w)=\Delta_i^{\mathrm{cm}}$, $R_{X_i|W=w}(\Delta_i^{\mathrm{cm}}(w))=R_{X_i|W=w}(\Delta_i^{\mathrm{cm}})$ for all $w\in {\mathbb W}$, i.e., they do not depend on the realizations $W=w$, and the mutual information $I(X_i;\widehat{X}_i^{\mathrm{cm}}|W=w)$ is also independent of $w\in {\mathbb W}$.}
{Hence,  (\ref{cond_in_3}) is  shown.}
   (c)   The reader may verify that the listed realization of the test channel of  $R_{X_i|W}(\Delta_i)$ ensures all  lower bounds of parts (a) and (b) are achievable. 
The listed realization  of the test of $R_{X_i|W}(\Delta_i)$ is shown, constructively,   for jointly Gaussian RVs $(X_i,W)$ in \cite{gkangos-charalambous:2021}.
 \end{proof}

 \begin{remark} Clearly, the statements of   Theorem~\ref{th:crdf_g} also   hold for  the  conditional joint RDF $R_{X_1,X_2|W}(\Delta_1, \Delta_2)$, i.e., $X_i$ and $\widehat{X}_i$ are replaced by the vectors $(X_1, X_2)$ and $(\widehat{X}_1, \widehat{X}_2)$.  
 \end{remark}

{Next, we apply Theorem~\ref{wyner_common_g} and     Theorem~\ref{th:crdf_g} to prove the first main theorem: }
${\cal R}(\Delta_1, \Delta_2)$,  
 as specified by (\ref{eqn_new_3}), for sources and distortions (\ref{dist_1})-(\ref{dist_3}), 
is achieved by the infimum
over ${\cal P}$ replaced by   the subset ${\cal P}^\mathrm{G}$, defined by  (\ref{eqn_new_5}). 

\begin{theorem}
\label{thm_par_g}
Consider a tuple of Gaussian RVs
$X_i: \Omega \rightarrow \mathbb{R}^{p_i}$,   $(X_1 , X_2 ) \in G(0,Q_{(X_1,X_2)})$ such that $Q_{(X_1,X_2)}\succ0$, with induced Gaussian measure ${\bf P}_0=G(0,Q_{(X_1,X_2)})$ on the space
$(\mathbb{R}^{p_1} \times \mathbb{R}^{p_2}, B(\mathbb{R}^{p_1})\otimes B(\mathbb{R}^{p_2}))$, 
  and  $D_{X_i} (x_i, \widehat{x}_i)= ||x_{i}-\widehat{x}_{i}||_{{\mathbb R}^{p_i}}^2, i=1,2$. Let  ${\bf P}_{X_1, X_2, W} \in {\cal P}^\mathrm{G}$,  be the family  of distributions  induced by the realization (\ref{ral_g1})-(\ref{ral_g2_a_n}) . 
\\
The Gray-Wyner achievable rate region  ${\cal R}(\Delta_1, \Delta_2)$
is determined by ${T^G}(\alpha_1, \alpha_2) = {T}(\alpha_1, \alpha_2)$,
\begin{align}
&{T^G}(\alpha_1, \alpha_2) = \inf_{ {\bf P}_{X_1, X_2, W} \in {\cal P}^\mathrm{G}  } \Big\{I(X_1, X_2; W)+\alpha_1 R_{X_1|W}(\Delta_1)  \nonumber\\
& \hspace*{3.2cm}+ \alpha_2 R_{X_2|W}(\Delta_2)\Big\}\in [0,\infty)  \label{g-w_reg_g}\\
&= \inf_{ (X_1, X_2, W) \in G(0, Q_{(X_1, X_2, W)}), \; \mbox{\small of (\ref{ral_g1})-(\ref{ral_g2_a})} } \Big\{I(X_1, X_2; W)\nonumber \\
&\hspace*{3.2cm}+\alpha_1 R_{X_1|W}(\Delta_1) + \alpha_2 R_{X_2|W}(\Delta_2)\Big\} \label{cig-w_reg_g}\\
&= \inf_{ Q_{X_1,X_2|W}, Q_{X_i|W}, i=1,2} \Big\{ \frac{1}{2} \ln \Big( \frac{\det(Q_{(X_1, X_2)})}{\det(Q_{X_1|X_2, W}) \det(Q_{X_2|W})}\Big)^+  \nonumber \\
&  \hspace*{3.2cm} +\alpha_1 R_{X_1|W}(\Delta_1) + \alpha_2 R_{X_2|W}(\Delta_2)\Big\}, \label{ccig-w_reg_g}\\
&Q_{X_1|X_2,W}=Q_{X_1|W} -Q_{X_1, X_2|W} Q_{X_2|W}^{-1} Q_{X_1, X_2|W}^T \succ 0  \label{cov_CI}
\end{align}
where $0\leq \alpha_i\leq 1,\; i=1,2,\;  \alpha_1+\alpha_2\geq 1$,  and  $R_{X_i|W}(\Delta_i), i=1,2$ are given in  Theorem~\ref{th:crdf_g}.(c). 
\end{theorem}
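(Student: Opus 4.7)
The plan is to establish the chain of equalities in (\ref{g-w_reg_g})--(\ref{ccig-w_reg_g}) in two stages: first show that restricting to the Gaussian subset $\mathcal{P}^\mathrm{G}$ suffices, i.e., $T^G(\alpha_1,\alpha_2) = T(\alpha_1,\alpha_2)$, and then rewrite the Gaussian infimum in terms of conditional covariances. The inequality $T \leq T^G$ is immediate since $\mathcal{P}^\mathrm{G} \subseteq \mathcal{P}$; the core work is the reverse inequality $T^G \leq T$, which I would derive by constructing for each $\mathbf{P}_{X_1,X_2,W}\in \mathcal{P}$ a jointly Gaussian surrogate whose objective in (\ref{eqn_new_3}) is no larger.

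Fix an arbitrary $\mathbf{P}_{X_1,X_2,W}\in \mathcal{P}$. First I would lower bound each of the three summands of the objective by a quantity depending on $W$ only through the second-order statistics of the linear-MMSE residual $(Z_1,Z_2)$ defined in (\ref{pr_g_7}) and the conditional covariances $Q_{X_i|W}$. Theorem \ref{wyner_common_g}(b) supplies $I(X_1,X_2;W) \geq \frac{1}{2}\ln(\det Q_{(X_1,X_2)}/\det Q_{(Z_1,Z_2)})^{+}$, and identifies the Gaussian conditions (i)--(ii) under which equality holds. Theorem \ref{th:crdf_g}(b)--(c) supplies the lower bound (\ref{cond_in_3_new_b}) on $R_{X_i|W}(\Delta_i)$ together with the Gaussian test-channel realization of part (c) that attains it, yielding the closed form (\ref{conRDF_g}) expressed through $Q_{X_i|W}$ alone.

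Next I would construct $W^G\in G(0,Q_{W^G})$ via the explicit realization (\ref{ral_g1})--(\ref{ral_g2_a_n}), choosing $Q_{W^G}$ and the cross-covariance $Q_{(X_1,X_2),W^G}$ so that the resulting noise covariance $Q_{(Z_1,Z_2)}$ coincides with that produced by the original $W$; by construction the conditional covariances $Q_{X_i|W^G}$ and $Q_{X_1,X_2|W^G}$ then agree with their counterparts under $\mathbf{P}$. For the jointly Gaussian triple $(X_1,X_2,W^G)\in \mathcal{P}^\mathrm{G}$ the mutual-information inequality becomes equality by Theorem \ref{wyner_common_g}(b), and the Gaussian test channel of Theorem \ref{th:crdf_g}(c) attains each RDF lower bound, so summing the three contributions gives $T^G\leq T$. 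To pass to the covariance-parametrized form (\ref{ccig-w_reg_g}), I would substitute the Gaussian identity $I(X_1,X_2;W)=\frac{1}{2}\ln(\det Q_{(X_1,X_2)}/\det Q_{(X_1,X_2)|W})$ and apply the Schur factorization $\det Q_{(X_1,X_2)|W}=\det Q_{X_2|W}\det Q_{X_1|X_2,W}$ together with (\ref{cov_CI}), reducing the infimum over Gaussian triples to one over the admissible conditional covariance tuple $(Q_{X_1,X_2|W},Q_{X_1|W},Q_{X_2|W})$. Finally, the identification of $\mathcal{R}(\Delta_1,\Delta_2)$ with $T^G$ follows from the Gray--Wyner representation (\ref{eqn_new_3}).

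The main obstacle will be verifying that a \emph{single} Gaussian surrogate $W^G$ simultaneously tightens all three lower bounds. Theorem \ref{wyner_common_g} requires only matching the second moments of the residual, whereas Theorem \ref{th:crdf_g}(c) additionally imposes the structural property (\ref{crdf_G3}) on the test channel together with the semidefinite constraints $Q_{V_i}\succeq 0$ and $Q_{X_i|W^G}-Q_{E_i}\succeq 0$. Two points must be checked: that these constraints remain feasible under the Gaussian replacement, which follows because the conditional covariances are preserved by construction and the parameters $(H_i,Q_{V_i})$ in the realization of Theorem \ref{th:crdf_g}(c) can be chosen independently for $i=1,2$; and that the Gaussian RDF value at $Q_{X_i|W^G}$ actually dominates $R_{X_i|W}(\Delta_i)$ under the original (possibly non-Gaussian) $W$, which is delivered by the lower-bound chain (\ref{cond_in_1_a})--(\ref{cond_in_3_new_b}) of Theorem \ref{th:crdf_g}(b) once marginal Gaussianity of $\mathbf{P}_{X_i}$ and the match of residual second moments are invoked. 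The remaining passage to (\ref{ccig-w_reg_g}) is then a routine block-matrix computation.
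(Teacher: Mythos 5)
Your proposal is correct and follows essentially the same route as the paper: lower-bound each of the three terms via Theorem~\ref{wyner_common_g} and Theorem~\ref{th:crdf_g}, observe that a single jointly Gaussian triple induced by the realization (\ref{ral_g1})--(\ref{ral_g2_a_n}) attains all three bounds simultaneously (using $\alpha_i\geq 0$), and then pass to the conditional-covariance parametrization via the determinant factorization $\det(Q_{(X_1,X_2)|W})=\det(Q_{X_1|X_2,W})\det(Q_{X_2|W})$. Your explicit construction of the surrogate $W^G$ matching the residual second moments, and your flagging of the simultaneous-achievability issue, merely spell out details that the paper's proof leaves implicit.
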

\begin{proof}
{By  Theorem~\ref{wyner_common_g} and     Theorem~\ref{th:crdf_g} the lower  bounds on   the quantities $I(X_1, X_2; W)$, $R_{X_i|W}(\Delta_i), i=1,2$, are simultaneously achieved by a  jointly Gaussian distributions ${\bf P}_{X_1,X_2,W} \in {\cal P}^G$,  induce by the realization (\ref{ral_g1})-(\ref{ral_g2_a_n})}.  Consequently,   (\ref{g-w_reg_g}) {follows  from the definition of $T(\alpha_1, \alpha_2)$ given by (\ref{eqn_new_3}),    and  Theorem~\ref{wyner_common_g},    Theorem~\ref{th:crdf_g}, due to $\alpha_i$ are nonnegative. Hence, the  infimum in $T(\alpha_1, \alpha_2)$,    is over   the parametrized  set  of  the jointly Gaussian RVs $(X_1, X_2, W) \in G(0, Q_{(X_1, X_2, W)})$ with joint distribution (\ref{jdf_g}), and (\ref{cig-w_reg_g}) follows.  From  the simultaneous achievability of (\ref{pr_g_7_aa_n}) and (\ref{conRDF_g}) then  (\ref{ccig-w_reg_g}) follows,  where    $R_{X_i|W}(\Delta_i)=\mbox{(\ref{conRDF_g})}$ depends only on $Q_{X_1|W}$, and the errors (see Theorem~\ref{th:crdf_g}.(c)).}  
 \end{proof}

Now, we prove the second main theorem of  the characterization of  ${\cal R}(\Delta_1, \Delta_2)$ for  (\ref{dist_1})-(\ref{dist_3}), using  (\ref{eq_32}), (\ref{eq_32_a}). 

\begin{theorem}
\label{them_g}
Consider the statement of Theorem~\ref{thm_par_g}. \\
For each ${\bf P}_{X_1, X_2, W} \in {\cal P}^\mathrm{G}$ and 
$\Delta_1 \geq 0, ~ \Delta_2 \geq 0$, define  
 \begin{align}
  {\cal R}^{{\bf P}_{X_1,X_2,W}}(\Delta_1, \Delta_2) = \Big\{&\big(R_0,R_1,R_2\big) \Big| \ \       R_0 \geq I(X_1, X_2; W), \nonumber \\ 
& R_1 \geq R_{X_1|W}(\Delta_1), \ \ R_2 \geq R_{X_2|W}(\Delta_2) \Big\} \nonumber  
\end{align} 
where  $R_{X_i|W}(\Delta_i), i=1,2$ are given in Theorem~\ref{th:crdf_g}.(c), and 
\begin{align}
&I(X_1,X_2;W) = \frac{1}{2} \ln \Big( \frac{\det(Q_{(X_1, X_2)})}{\det(Q_{X_1|X_2, W}) \det(Q_{X_2|W})}\Big)^+  \nonumber \\
&Q_{X_1|X_2, W}\succ 0, \hso Q_{X_2|W}\succ 0, \hso Q_{X_1|X_2,W}= \mbox{ (\ref{cov_CI})}. \nonumber
 \end{align} 
The achievable Gray-Wyner lossy rate region is  ${\cal R}(\Delta_1, \Delta_2)={\cal R}^{*,\mathrm{G}}(\Delta_1, \Delta_2) $, where,  
\begin{align}
{\cal R}^{*,\mathrm{G}}(\Delta_1, \Delta_2) &= \Big(\bigcup_{ {\bf P}_{X_1,X_2, W} \in {\cal P}^\mathrm{G}} {\cal R}^{{\bf P}_{X_1,X_2,W}}(\Delta_1, \Delta_2)\Big)^\mathrm{c} \nonumber \\
&=\Big(\bigcup_{ Q_{X_1,X_2|W}, Q_{X_i|W}, i=1,2} {\cal R}^{{\bf P}_{X_1,X_2,W}}(\Delta_1, \Delta_2)\Big)^\mathrm{c}. \nonumber
\end{align}
\end{theorem}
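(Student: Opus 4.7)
The plan is to establish ${\cal R}(\Delta_1,\Delta_2)={\cal R}^{*,\mathrm{G}}(\Delta_1,\Delta_2)$ by two set inclusions, leveraging Theorems~\ref{wyner_common_g} and~\ref{th:crdf_g} to replace an arbitrary joint distribution by a jointly Gaussian one that dominates its rate triple componentwise. The forward inclusion ${\cal R}^{*,\mathrm{G}}\subseteq {\cal R}(\Delta_1,\Delta_2)$ is immediate, since ${\cal P}^{\mathrm{G}}\subset{\cal P}$ implies the union appearing in ${\cal R}^{*,\mathrm{G}}$ is a subset of the union appearing in ${\cal R}^{*}$, and closure preserves inclusion.

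For the reverse inclusion, I would fix an arbitrary ${\bf P}_{X_1,X_2,W}\in{\cal P}$, with associated corner point $(I(X_1,X_2;W),\,R_{X_1|W}(\Delta_1),\,R_{X_2|W}(\Delta_2))$, and construct a matched jointly Gaussian ${\bf P}'_{X_1,X_2,W'}\in{\cal P}^{\mathrm{G}}$ whose corner $(I',R_1',R_2')$ satisfies $I'\leq I$ and $R_i'\leq R_i$ for $i=1,2$. The construction proceeds via the canonical realization (\ref{ral_g1})--(\ref{ral_g2_a_n}): set $Z_i:=X_i-{\bf E}[X_i|W]$, form $Q_{(Z_1,Z_2)}$, then take $W'\in G(0,Q_{W'})$ and form the jointly Gaussian triple $(X_1,X_2,W')$ with $Q_{X_1,X_2|W'}=Q_{(Z_1,Z_2)}$. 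By Theorem~\ref{wyner_common_g}(b) the Gaussian saturates the chain (\ref{pr_g_1})--(\ref{pr_g_7_aa_n}), giving $I(X_1,X_2;W')\leq I(X_1,X_2;W)$; and by Theorem~\ref{th:crdf_g}(b)-(c) the Gaussian with this matched conditional covariance saturates the chain culminating in (\ref{cond_in_3})--(\ref{conRDF_g}), giving $R_{X_i|W'}(\Delta_i)\leq R_{X_i|W}(\Delta_i)$. These three componentwise dominations yield ${\cal R}^{{\bf P}_{X_1,X_2,W}}\subseteq {\cal R}^{{\bf P}'_{X_1,X_2,W'}}$; taking the union over ${\bf P}\in{\cal P}$ and then the closure gives ${\cal R}^{*}\subseteq {\cal R}^{*,\mathrm{G}}$.

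The second equality in the claim, reindexing the union from ${\bf P}\in{\cal P}^{\mathrm{G}}$ to the triple of conditional covariances $(Q_{X_1,X_2|W},\,Q_{X_1|W},\,Q_{X_2|W})$, follows from Theorem~\ref{thm_par_g}: a Gaussian ${\bf P}\in{\cal P}^{\mathrm{G}}$ with fixed marginal $(X_1,X_2)\in G(0,Q_{(X_1,X_2)})$ is determined by $Q_{(X_1,X_2,W)}$, and through (\ref{g-w_reg_g})--(\ref{ccig-w_reg_g}) the three rate-defining quantities $I(X_1,X_2;W)$ and $R_{X_i|W}(\Delta_i)$ depend on $Q_{(X_1,X_2,W)}$ only through these three conditional covariance blocks, so the index set may be recast accordingly.

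The principal obstacle is ensuring that the \emph{same} jointly Gaussian construction ${\bf P}'$ dominates all three rate bounds simultaneously. Theorems~\ref{wyner_common_g} and~\ref{th:crdf_g} each establish Gaussian optimality for their respective quantities separately, but the present application requires a single realization to serve both purposes: the canonical realization (\ref{ral_g1})--(\ref{ral_g2_a_n}) must match the conditional covariance $Q_{(Z_1,Z_2)}$ needed for the $I$-bound in Theorem~\ref{wyner_common_g}(b), while the induced jointly Gaussian test-channel structure of Theorem~\ref{th:crdf_g}(c) must simultaneously achieve the RDF lower bounds with the same $W'$. This simultaneous achievability is precisely what the proof of Theorem~\ref{thm_par_g} already exploits at the scalar level of the support function $T(\alpha_1,\alpha_2)$; lifting it to the set-theoretic level requires only the observation that the rate-region upper orthants respect componentwise domination, followed by routine handling of the closure operation on a union of closed upper orthants in $\mathbb{R}^{3}_{+}$.
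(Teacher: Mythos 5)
Your proposal is correct and follows essentially the same route as the paper, whose own proof of this theorem is the single line ``this follows similarly to Theorem~\ref{thm_par_g}'': both arguments reduce to the trivial inclusion ${\cal P}^{\mathrm{G}}\subset{\cal P}$ in one direction and, in the other, to the simultaneous Gaussian achievability of the lower bounds on $I(X_1,X_2;W)$ and $R_{X_i|W}(\Delta_i)$ from Theorems~\ref{wyner_common_g} and~\ref{th:crdf_g}, lifted from the scalarized $T(\alpha_1,\alpha_2)$ to the union of upper orthants via componentwise domination. Your version is in fact more explicit than the paper's about the two inclusions, the matched-covariance construction of $W'$, and the reindexing of the union by the conditional covariances.
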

\begin{proof} 
This follows similarly to Theorem~\ref{thm_par_g}. 
\end{proof}  

A subset of ${\cal R}^{*,\mathrm{G}}(\Delta_1, \Delta_2) $, which is easier to compute, is obtained by restricting ${\cal P}^\mathrm{G}$, to  
 Gaussian distributions such that $W$ makes $X_1$ and $X_2$ conditional independent, defined by 
\begin{align}
{\cal P}^{\mathrm{CIG}} \tri & \Big\{ {\bf P}_{X_1, X_2, W} \in {\cal P}^\mathrm{G} \Big| \ \ {\bf P}_{X_1, X_2|W}={\bf P}_{X_1|W} {\bf P}_{X_2|W} \Big\}\subseteq {\cal P}^\mathrm{G} . \nonumber
\end{align}
Corollary~\ref{cor_sgy} is a special case of  Theorem~\ref{thm_par_g} and  Theorem~\ref{them_g},   
 by restricting ${\cal P}^{\mathrm{G}}$  to 
${\cal P}^{\mathrm{CIG}} \subseteq {\cal P}^{\mathrm{G}} $. This restriction over bounds $T^\mathrm{G}(\alpha_1,\alpha_2)$ of Theorem~\ref{thm_par_g}, and determines a  subset of the  rate region that intersects ${\cal R}(\Delta_1, \Delta_2)$. 
\par
\begin{corollary} 
\label{cor_sgy}
Consider the statement of Theorem~\ref{thm_par_g}.
There exists a Gaussian measure ${\bf P}_1=G(0,Q_{(X_1,X_2,W)})$ as defined in 
Theorem~\ref{wyner_common_g}.(a), which additionally satisfies  ${\bf P}_{X_1, X_2|W}={\bf P}_{X_1|W} {\bf P}_{X_2|W}$, and is  induced by realizations
\begin{align}
&X_i= Q_{X_i,W}Q_W^{-1} W + Z_i, \; \; Z_i \in G(0,Q_{Z_i}), \; \;  i=1,2,   \\
& (Z_1, Z_2, W) \hso  \mbox{mutually independent}, \\
&Q_{Z_i}= Q_{X_i} -Q_{X_i,W}Q_W^{-1} Q_{X_i,W}^T > 0, \ \ i=1,2.
\end{align}
An upper bound on {$T^{G}(\alpha_1,\alpha_2)$} of Theorem~\ref{thm_par_g}  
for the same  $(\alpha_1, \alpha_2)$  is obtain by replacing the set ${\cal P}^{\mathrm{G}}$ by ${\cal P}^{\mathrm{CIG}}$ in \eqref{g-w_reg_g}
\begin{align}
&{ T^G(\alpha_1, \alpha_2)} \leq  T^{\mathrm{CIG}}(\alpha_1, \alpha_2) 
= \inf_{ Q_{X_1|W}, Q_{X_2|W}} \Big\{ \alpha_1 R_{X_1|W}(\Delta_1)  \nonumber \\
   &\hspace{0.5cm} + \alpha_2 R_{X_2|W}(\Delta_2) + \frac{1}{2} \ln \Big( \frac{\det(Q_{(X_1, X_2)})}{\det(Q_{X_1|W}) \det(Q_{X_2|W})}\Big)^+  \Big\}, \label{ccig-w_reg}
\end{align}
where 
$0\leq \alpha_i\leq 1,\; i=1,2,\;  \alpha_1+\alpha_2\geq 1$,  and  $R_{X_i|W}(\Delta_i), i=1,2$ are given in  Theorem~\ref{th:crdf_g}.(c).\\
A subset  ${\cal R}^{*,\mathrm{CIG}}(\Delta_1, \Delta_2)\subseteq {\cal R}^{*,\mathrm{G}}(\Delta_1, \Delta_2)$ of Theorem~\ref{them_g} is 
\begin{align}
{\cal R}^{*,\mathrm{CIG}}(\Delta_1, \Delta_2)\tri \big(\bigcup_{ Q_{X_i|W}, i=1,2} {\cal R}^{{\bf P}_{X_1,X_2,W}}(\Delta_1, \Delta_2)\big)^\mathrm{c}. \label{overbound_1}
 \end{align}
\end{corollary}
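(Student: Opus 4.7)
The plan is to derive Corollary~\ref{cor_sgy} as a direct specialization of Theorem~\ref{thm_par_g} and Theorem~\ref{them_g} by restricting the family $\mathcal{P}^{\mathrm{G}}$ to those jointly Gaussian distributions for which $W$ renders $X_1$ and $X_2$ conditionally independent. The key algebraic fact I would invoke is that, for a jointly Gaussian triple $(X_1,X_2,W)$, the conditional independence $\mathbf{P}_{X_1,X_2|W}=\mathbf{P}_{X_1|W}\mathbf{P}_{X_2|W}$ is equivalent to the vanishing of the conditional cross-covariance, i.e.\ $Q_{X_1,X_2|W}=0$. This criterion is what lets me simultaneously rewrite the realization \eqref{ral_g1}--\eqref{ral_g2_a_n} in decoupled form and collapse the mutual-information term of Theorem~\ref{thm_par_g}.

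First I would specialize the realization of Theorem~\ref{wyner_common_g}(a). From the block structure of $Q_{(X_1,X_2),W}$ one already has $\mathbf{E}[X_i|W]=Q_{X_i,W}Q_W^{-1}W$, so defining $Z_i:=X_i-\mathbf{E}[X_i|W]$ yields $X_i=Q_{X_i,W}Q_W^{-1}W+Z_i$ as claimed. Because $Q_{X_1,X_2|W}=0$, the cross-block $\mathbf{E}[Z_1 Z_2^{T}]$ of $Q_{(Z_1,Z_2)}$ vanishes, so $Z_1$ and $Z_2$ are independent; combined with the independence of $(Z_1,Z_2)$ from $W$ already guaranteed by Theorem~\ref{wyner_common_g}(a), this gives the mutual independence of $(Z_1,Z_2,W)$ and the diagonal form $Q_{Z_i}=Q_{X_i|W}=Q_{X_i}-Q_{X_i,W}Q_W^{-1}Q_{X_i,W}^{T}\succ 0$, where positivity is inherited from the assumption $Q_{(X_1,X_2)}\succ 0$ in Theorem~\ref{thm_par_g}.

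Next, for the bound on $T^{\mathrm{G}}(\alpha_1,\alpha_2)$, I would simply observe that $\mathcal{P}^{\mathrm{CIG}}\subseteq \mathcal{P}^{\mathrm{G}}$, so taking the infimum in \eqref{g-w_reg_g} over the smaller family can only increase the value; hence $T^{\mathrm{G}}(\alpha_1,\alpha_2)\le T^{\mathrm{CIG}}(\alpha_1,\alpha_2)$. To obtain the explicit form \eqref{ccig-w_reg} I would substitute $Q_{X_1,X_2|W}=0$ into the Schur-complement expression \eqref{cov_CI}, yielding $Q_{X_1|X_2,W}=Q_{X_1|W}$, and then insert this into \eqref{ccig-w_reg_g}; the conditional RDF terms $R_{X_i|W}(\Delta_i)$ are already in the form supplied by Theorem~\ref{th:crdf_g}(c), so the parametrization collapses to the two marginal conditional covariances $Q_{X_1|W},Q_{X_2|W}$ alone.

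Finally, the set inclusion \eqref{overbound_1} is immediate from Theorem~\ref{them_g}: since the union defining $\mathcal{R}^{*,\mathrm{CIG}}(\Delta_1,\Delta_2)$ is taken over the subfamily $\mathcal{P}^{\mathrm{CIG}}\subseteq \mathcal{P}^{\mathrm{G}}$ of distributions used for $\mathcal{R}^{*,\mathrm{G}}(\Delta_1,\Delta_2)$, and since closure preserves inclusion of sets, the claim follows. There is no serious obstacle in this argument; essentially all steps are bookkeeping once the Gaussian equivalence $Q_{X_1,X_2|W}=0\Leftrightarrow X_1\!\perp\! X_2\mid W$ is invoked, together with the simultaneous achievability of the lower bounds in Theorem~\ref{wyner_common_g} and Theorem~\ref{th:crdf_g} by jointly Gaussian realizations.
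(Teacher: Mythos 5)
Your proposal is correct and follows essentially the same route as the paper's proof: restrict $\mathcal{P}^{\mathrm{G}}$ to $\mathcal{P}^{\mathrm{CIG}}$, use the Gaussian equivalence between conditional independence and $Q_{X_1,X_2|W}=0$ to decouple the realization of Theorem~\ref{wyner_common_g}.(a) and collapse $Q_{X_1|X_2,W}$ to $Q_{X_1|W}$ in \eqref{ccig-w_reg_g}, and note that an infimum (resp.\ union) over a subfamily yields an upper bound (resp.\ subset). Your write-up is in fact slightly more explicit than the paper's, which leaves the Schur-complement specialization and the independence of $Z_1$ and $Z_2$ implicit.
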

\begin{proof}
 The first part  follows from   
the derivation of  Theorem~\ref{wyner_common_g}, by restricting  the joint probability {distributions of the triple $(X_1,X_2, W)$ to satisfy   conditional independence ${\bf P}_{X_1, X_2|W}={\bf P}_{X_1|W} {\bf P}_{X_2|W}$. }   Since  ${\cal P}^{\mathrm{CIG}} \subseteq {\cal P}^\mathrm{G}$, then the inequality  (\ref{ccig-w_reg}) holds, and also
 $Q_{X_1,X_2|W}=0$, in Theorem~\ref{thm_par_g}. Clearly, $T^{\mathrm{CIG}}(\alpha_1, \alpha_2)$ determines an over bound and a non-empty set on ${\cal R}^{*,\mathrm{G}}(\Delta_1, \Delta_2)$, since ${\cal P}^{\mathrm{CIG}} \subseteq {\cal P}^\mathrm{G}$.    
 Finally, we obtain (\ref{overbound_1}), as a special case of Theorem~\ref{them_g}.
\end{proof}

\section{Pangloss Plane of the Gray-Wyner Network}
{In this section we characterize the Pangloss Plane of the Gray-Wyner network, 
for arbitrary sources and distortions. Our contribution  is the presentation of a proof that  uses the Gray and Wyner characterization \cite{gray-wyner:1974} of $(R_0, R_1, R_2) \in {\cal R}(\Delta_1, \Delta_2)$, which  is much shorter than   \cite[eqns(21)]{viswanatha:akyol:rose:2014}. 
}

\begin{theorem} 
\label{thm_pangloss}
Consider an arbitrary tuple of sources and distortions. The set of rate triples $(R_0, R_1, R_2) \in {\cal R}(\Delta_1, \Delta_2)$ 
 which lie on the Pangloss plane,   $R_0+R_1+R_2= R_{X_1, X_2}(\Delta_1, \Delta_2)$,   are characterized by,\vspace{-0.15cm}
\begin{align}
&\sum_{i=1}^2 R_{X_i|W}(\Delta_i)+ I(X_1, X_2; W)=\sum_{i=0}^2 R_i=R_{X_1, X_2}(\Delta_1, \Delta_2) \label{pangloss_1}
\end{align}
over  a strictly positive surface of the distortion region,   denoted by\footnote{see Gray \cite{gray:1973} for definition.} ${\cal D}_{X_1, X_2|W}(\Delta_1, \Delta_2) \subseteq [0,\infty)\times [0,\infty)$, such that  the  joint distribution  ${\bf P}_{W,X_1,X_2,\widehat{X}_1,\widehat{X}_2} $ satisfies the  following conditions,\vspace{-0.15cm}
\begin{align}
 \;{\bf P}_{\widehat{X}_1,\widehat{X}_2|W}={\bf P}_{\widehat{X}_1|W} {\bf P}_{\widehat{X}_2|W}, \hso  {\bf P}_{X_1,X_2|\widehat{X}_1, \widehat{X}_2, W}={\bf P}_{X_1, X_2|\widehat{X}_1, \widehat{X}_2} \label{pangloss_2}
\end{align}
and the marginals  ${\bf P}_{X_1,X_2,\widehat{X}_1,\widehat{X}_2}$ and  ${\bf P}_{X_i,\widehat{X}_i,W}$ are generated by the test channels of $R_{X_1,X_2}(\Delta_1, \Delta_2)$ and $R_{X_i|W}(\Delta_i)$, respectively. 
\end{theorem}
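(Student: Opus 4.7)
The plan is to combine the Gray--Wyner sum-rate cut-set bound $R_0+R_1+R_2 \geq R_{X_1, X_2}(\Delta_1, \Delta_2)$ from \eqref{eq_32a} with the per-rate bounds $R_0 \geq I(X_1, X_2; W)$ and $R_i \geq R_{X_i|W}(\Delta_i)$ that any $\mathbf{P}_{X_1, X_2, W}\in\mathcal{P}$ supplies through \eqref{eq_32}. First I would establish the dual inequality
\begin{align*}
I(X_1, X_2; W) + R_{X_1|W}(\Delta_1) + R_{X_2|W}(\Delta_2) \geq R_{X_1, X_2}(\Delta_1, \Delta_2),
\end{align*}
valid for every auxiliary $W$, by forming the product test channel $\mathbf{P}_{\widehat{X}_1 | X_1, W}\mathbf{P}_{\widehat{X}_2 | X_2, W}$ from the two per-decoder conditional RDF test channels; its induced $\mathbf{P}_{X_1, X_2, \widehat{X}_1, \widehat{X}_2}$ satisfies both distortion constraints, so $I(X_1, X_2; \widehat{X}_1, \widehat{X}_2) \geq R_{X_1, X_2}(\Delta_1, \Delta_2)$, while sub-additivity of joint entropy gives $I(X_1, X_2; \widehat{X}_1, \widehat{X}_2) \leq I(X_1, X_2; W) + \sum_i I(X_i; \widehat{X}_i | W)$. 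On the Pangloss plane these two sandwiches collapse onto each other, forcing the per-rate tightness $R_0 = I(X_1, X_2; W)$, $R_i = R_{X_i|W}(\Delta_i)$ together with identity \eqref{pangloss_1}.

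To extract the distributional conditions \eqref{pangloss_2} I would track equality through the chain
\begin{align*}
R_{X_1, X_2}(\Delta_1, \Delta_2)
& \stackrel{(a)}{\leq} I(X_1, X_2; \widehat{X}_1, \widehat{X}_2) \\
& \stackrel{(b)}{\leq} I(X_1, X_2; W, \widehat{X}_1, \widehat{X}_2) \\
& = I(X_1, X_2; W) + I(X_1, X_2; \widehat{X}_1, \widehat{X}_2 | W) \\
& \stackrel{(c)}{\leq} I(X_1, X_2; W) + \sum_{i=1}^2 I(X_i; \widehat{X}_i | W).
\end{align*}
Tightness of $(a)$ forces $\mathbf{P}_{X_1, X_2, \widehat{X}_1, \widehat{X}_2}$ to be a test channel of $R_{X_1, X_2}(\Delta_1, \Delta_2)$; tightness of $(b)$, i.e., $I(X_1, X_2; W | \widehat{X}_1, \widehat{X}_2) = 0$, is precisely the Markov statement $\mathbf{P}_{X_1, X_2 | \widehat{X}_1, \widehat{X}_2, W} = \mathbf{P}_{X_1, X_2 | \widehat{X}_1, \widehat{X}_2}$; and tightness of $(c)$, after using the product-form identity $H(\widehat{X}_1, \widehat{X}_2 | X_1, X_2, W) = H(\widehat{X}_1 | X_1, W) + H(\widehat{X}_2 | X_2, W)$, reduces to $H(\widehat{X}_1, \widehat{X}_2 | W) = H(\widehat{X}_1 | W) + H(\widehat{X}_2 | W)$, i.e., to the conditional independence $\mathbf{P}_{\widehat{X}_1, \widehat{X}_2 | W} = \mathbf{P}_{\widehat{X}_1 | W}\mathbf{P}_{\widehat{X}_2 | W}$. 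Per-index tightness $I(X_i; \widehat{X}_i | W) = R_{X_i|W}(\Delta_i)$ then identifies each $\mathbf{P}_{X_i, \widehat{X}_i, W}$ with the corresponding conditional RDF test channel, completing the characterisation.

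The main obstacle will be the \emph{simultaneous} realisability of these four equality conditions on a non-trivial domain: the single joint $\mathbf{P}_{W, X_1, X_2, \widehat{X}_1, \widehat{X}_2}$ must have its $(X_1, X_2, \widehat{X}_1, \widehat{X}_2)$-marginal equal to a test channel of the joint RDF, its $(X_i, \widehat{X}_i, W)$-marginals equal to per-decoder test channels of the conditional RDFs, and satisfy both the Markov chain $W \to (\widehat{X}_1, \widehat{X}_2) \to (X_1, X_2)$ and the conditional independence $\widehat{X}_1 \perp \widehat{X}_2 | W$. Arguing that this intersection is non-empty and cuts out a strictly positive surface $\mathcal{D}_{X_1, X_2 | W}(\Delta_1, \Delta_2)$ of the distortion region, rather than a measure-zero slice or nothing, is where the genuine work lies; I would handle it following the boundary analysis of the conditional distortion region in Gray~\cite{gray:1973}, matching the parametrisation of the joint-RDF test-channel family against that of the two per-decoder conditional test-channel families along this boundary.
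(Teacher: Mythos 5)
Your proof is correct and, at bottom, rests on the same sandwich as the paper's: the Gray--Wyner per-rate bounds \eqref{eq_32} give $\sum_{i=0}^{2} R_i \geq I(X_1,X_2;W)+\sum_{i=1}^{2} R_{X_i|W}(\Delta_i)$ for any achievable triple, and a reverse inequality $I(X_1,X_2;W)+\sum_{i}R_{X_i|W}(\Delta_i)\geq R_{X_1,X_2}(\Delta_1,\Delta_2)$ collapses everything onto \eqref{pangloss_1} on the Pangloss plane. Where you differ is in how that reverse inequality and its equality conditions are obtained: the paper quotes Gray's two lower bounds \eqref{in_1}--\eqref{in_2} and cites Leiner for their tightness conditions, whereas you re-derive the composite bound from scratch via the chain $R_{X_1,X_2}(\Delta_1,\Delta_2)\leq I(X_1,X_2;\widehat{X}_1,\widehat{X}_2)\leq I(X_1,X_2;W,\widehat{X}_1,\widehat{X}_2)=I(X_1,X_2;W)+I(X_1,X_2;\widehat{X}_1,\widehat{X}_2|W)\leq I(X_1,X_2;W)+\sum_i I(X_i;\widehat{X}_i|W)$, evaluated on the product test channel ${\bf P}_{\widehat{X}_1|X_1,W}{\bf P}_{\widehat{X}_2|X_2,W}$. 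This buys a self-contained argument in which each condition of \eqref{pangloss_2} is visibly the tightness condition of one link ($I(X_1,X_2;W|\widehat{X}_1,\widehat{X}_2)=0$ for the Markov condition, $I(\widehat{X}_1;\widehat{X}_2|W)=0$ for the conditional independence), at the cost of a longer write-up. One caution: your last link uses $H(\widehat{X}_1,\widehat{X}_2|X_1,X_2,W)=\sum_i H(\widehat{X}_i|X_i,W)$, which is a property of the product construction and not of an arbitrary joint ${\bf P}_{W,X_1,X_2,\widehat{X}_1,\widehat{X}_2}$, so (exactly like the paper's proof) what is actually established is sufficiency of \eqref{pangloss_2} together with membership in the appropriate distortion region, not a full converse. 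You are also right that the remaining substantive issue is the simultaneous realizability of all the equality conditions on a strictly positive surface ${\cal D}_{X_1,X_2|W}(\Delta_1,\Delta_2)$; the paper does not resolve this either, deferring it to Gray's boundary analysis, so flagging it rather than claiming it is the appropriate move.
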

\begin{proof}  Recall Gray's lower bounds \cite{gray:1973}, \vspace{-0.15cm}
\begin{align}
R_{X_1|W}(\Delta_1)+ R_{X_2|W}(\Delta_2)\geq R_{X_1, X_2|W}(\Delta_1, \Delta_2),  \label{in_1}  \\
R_{X_1, X_2|W}(\Delta_1, \Delta_2)\geq R_{X_1, X_2}(\Delta_1, \Delta_2)- I(X_1,X_2;W)\label{in_2}
\end{align}   
It is easy to show (see for example \cite[Theorem~1]{leiner:1977}) that   (\ref{in_1}) holds with equality if  the left hand side of \eqref{pangloss_2} holds and \eqref{in_2} holds with equality  if the right  hand side of \eqref{pangloss_2} holds for a certain distortion region.  
Take a triple $(R_0,R_1, R_2)\in {\cal R}(\Delta_1, \Delta_2)$ 
such that $\sum_{i=1}^3R_i=R_{X_1, X_2}(\Delta_1, \Delta_2)$. Then,\vspace{-0.17cm}
\begin{align}
\hspace{-0.3cm}\sum_{i=1}^3R_i = R_{X_1, X_2}(\Delta_1, \Delta_2) &\leq  R_{X_1, X_2|W}(\Delta_1, \Delta_2)+ I(X_1, X_2;W)   \label{chain_4}   \\
& \hspace{-0.8cm}\leq  R_{X_1|W}(\Delta_1)+ R_{X_2|W}(\Delta_2) + I(X_1, X_2;W)  \label{chain_5}
\end{align}
where (\ref{chain_4})   is due to    (\ref{in_2}) and holds with equality on   a strictly positive surface   if the second condition in (\ref{pangloss_2}) holds, and (\ref{chain_5}) is to due to  inequality (\ref{in_1}) and holds with equality  if  the first  condition in (\ref{pangloss_2}) holds. The reverse inequality to  (\ref{chain_5}) is 
obtained as follows. For any    $(R_0, R_1, R_2) \in {\cal R}(\Delta_1, \Delta_2)$, by (\ref{eq_32}), follows   $  R_0\geq I(X_1,X_2;W), R_i \geq R_{X_1|W}(\Delta_1),i=1,2$, and hence, \vspace{-0.3cm}
\begin{align}
 R_{X_1, X_2}(\Delta_1, \Delta_2)&= \sum_{i=0}^2 R_i \geq  I(X_1,X_2;W)+ R_{X_1|W}(\Delta_1)+ R_{X_2|W}(\Delta_2).\nonumber
\end{align} 
  Hence, if (\ref{pangloss_2}) holds the upper and lower bounds coincide, and (\ref{pangloss_1})  is obtained.
This  completes the proof. 
\end{proof}

\section{Concluding Remarks}\label{sec:concludingremarks}
Characterized in this paper,  is the  Gray and Wyner \cite{gray-wyner:1974} achievable lossy rate region ${\cal R}(\Delta_1, \Delta_2)$  of a tuple of jointly Gaussian RVs, $X_1 : \Omega \rar {\mathbb R}^{p_1}, X_2 : \Omega \rar {\mathbb R}^{p_2}$ with square-error fidelity at the two decoders. The achievable rate region is parametrized by the 3 conditional covariances, $Q_{X_1,X_2|W}, Q_{X_1|W}, Q_{X_2|W}$ of a triple of Gaussian RVs $(X_1,X_2, W)$, where $W: \Omega \rar {\mathbb R}^n$ is a Gaussian RV. However, an   over bound on ${\cal R}(\Delta_1, \Delta_2)$  is obtained by the simpler parametrization  with respect to  $Q_{X_1|W}, Q_{X_2|W}$, which specifies a subset of the rate region. {Versions of these results are found in   \cite{charalambous:schuppen:2019:arxiv}.  The  characterizations of this  paper settled a long term open problem, regarding the Gray and Wyner rate region. }

\begin{footnotesize}
\bibliographystyle{IEEEtran}
\bibliography{bibliography}
\bibliographystyle{plain}
\end{footnotesize}

\end{document}